\newtheorem{lem}{Lemma}
\newtheorem{rem}{Remark}[section]
\newcommand{\into}{\rightarrow}
\newcommand{\E}{\mathbb{E}}
\newcommand{\PP}{\mathbb{P}}
\newcommand{\Real}{\mathbb{R}}
\newcommand{\F}{\mathpzc{F}}
\newcommand{\half}{\frac{1}{2}} 
\newcommand{\pder}[2]{\frac{\partial #1}{\partial #2}}
\newcommand{\qder}[2]{\frac{\partial^2 #1}{\partial^2 #2}}
\newcommand{\ppder}[3]{\frac{\partial^2 #1}{\partial #2 \partial #3}}
\newcommand{\tr}{\mathbf{tr}}
\DeclareMathAlphabet{\mathpzc}{OT1}{pzc}{m}{it}
\numberwithin{equation}{section}
\title{\textbf{Denoised Monte Carlo \\ for option pricing and Greeks estimation}}
\author[1]{Muchorski R. \thanks{Independent researcher, Ph.D. (Jagiellonian University)}}
\author[2]{Daniluk A.   \thanks{Independent researcher, Ph.D. (Jagiellonian University)}}
\author[1]{Lakshtanov E. \thanks{CIDMA, Depatment of Mathematics, University of Aveiro, 3810, Portugal}}
\affil[1]{\small\url{rafal.muchorski@gmail.com}}
\affil[2]{\small\url{andrzej.daniluk@gmail.com}}
\affil[3]{\small\url{lakshtanov@ua.pt}}
\begin{document}
\maketitle

\begin{abstract}
We present a novel technique of Monte Carlo error reduction that finds direct application in option pricing and Greeks estimation. The method is applicable to any LSV modelling framework and concerns a broad class of payoffs, including path-dependent and multi-asset cases. Most importantly, it allows to reduce the Monte Carlo error even by an order of magnitude, which is shown in several numerical examples.
\end{abstract}

\section{Introduction}
With only a few exceptions, including specific payoffs or underlying asset dynamics, option pricing requires the use of numerical techniques. Most often the only viable choice is a Monte Carlo (MC) simulation. This, however, poses the problem of the random error inherent to this method and involves a trade-off between the pricing accuracy and the computational cost. In order to deal with such problem, several techniques leading to some error reduction have been invented. The majority of those are based on concepts such as control variate (e.g. \cite{HullWhite}, \cite{DuLiuGu}, \cite{FouqHan07}) or importance sampling (e.g. \cite{GlasHei}, \cite{CapIS}, \cite{ZhaoLiu}). However, in many cases the applications of the known methods concern quite specific situations, pertaining either to the type of the payoff or asset dynamics.\\

In this paper we present a novel approach that is applicable to a broad class of option payoffs with a European exercise style (possibly multi-asset and path-dependent) and underlying processes (arbitrary pure diffusion, in general). \\ \\
The idea could be sketched as follows: 
\begin{enumerate}
	\item Given the dynamics of the underlying process, we introduce another, auxiliary process with simplified dynamics (typically an arithmetic or geometric Brownian motion). The only requirement is that the payoff and simplified dynamics admits a fast and easy calculation of the option price as a function of the underlying.
	\item Then we consider the option pricing function as if the dynamics of the underlying followed this auxiliary process.
	\item By setting the original underlying process as an argument to this pricing function, we construct a process whose terminal value is the option payoff. At the same time, we explicitly decompose this process into two parts: a martingale and an integral of a drift, which can be calculated explicitly. 
\end{enumerate}
The crux of the method is that, for the purpose of calculating the expected value, the martingale part of the latter process can be dropped, and only the drift integral is to be calculated. However, this component typically accounts for only a tiny part of the randomness of the whole process. Hence, if it is calculated using MC simulation, its error is much smaller than that of the option payoff itself. \\ \\
The paper is organized as follows. In the first section, we present a formal description of the proposed approach, together with the main theoretical result. First, we do it for a European payoff and then show how it could be generalized to a path-dependent case. In the second section, we present numerical results obtained for various option payoffs (including path-dependent and multi-asset options) for selected stochastic volatility models (Heston and SABR). We compare the simulation prices and errors from our method with the corresponding results obtained in crude MC simulations.

\section{Main result}
Let $\left(\Omega,\F,\PP\right)$ be a filtered probability space with filtration $\ F = \left(\F_t\right)_{t \in [0,T]}, T > 0 $ 
and let $W$, $\tilde{W}$ be standard R-dimensional ($R \geq 1$), $\F$ -adapted Wiener processes under $\PP$. 
For a given random variable $Z$ on $\Omega$ we denote $\E_t Z \coloneqq \E^\PP \left( Z|\F_t \right)$. 
Henceforth, we will also assume that all expected values which appear in formulas exist. 
\\
Consider a pair of continuous, square-integrable, d-dimensional ($R \leq d$) diffusion processes 
\begin{equation}
X, \tilde{X}: [0, T] \times \Omega \into \mathcal{C} \subset \Real^d
\end{equation}
where $ \mathcal{C} = (\underline{B}_1, \overline{B}_1) \times \cdots \times (\underline{B}_d, \overline{B}_d)$ 
for some $ -\infty \leqslant \underline{B}_k < \overline{B}_k \leqslant +\infty, k = 1, \ldots, d $. 
\\ \\
Moreover, we assume that $\tilde{X}$ attains each its point of $\mathcal{C}$ , i.e. for any $t \in [0,T], x \in \mathcal{C} $ the density of $\tilde{X}_t$ at $x$ is positive. Suppose that $X, \tilde{X}$ are solutions of the SDEs
\begin{equation}\label{dyn_orig}
dX_t = \mu_t dt + \sigma_t dW_t
\end{equation}
\begin{equation}\label{dyn} 
d \tilde{X}_t = \tilde{\mu} \left(t, \tilde{X}_t \right) dt + \tilde{\sigma} \left(t, \tilde{X}_t \right) d\tilde{W}_t
\end{equation}
where 
\begin{equation}
\sigma: [0, T] \times \Omega \into \mathcal{M}(d,R), \quad \tilde{\sigma}: [0, T]  \times \mathcal{C}^d \into \mathcal{M}(d,R)
\end{equation}
are respectively some stochastic volatility process and a local volatility function, where $\mathcal{M}(d,R)$ is the space of $d \times R$ matrices. Similarly, $\mu$ and $\tilde{\mu}$ are stochastic drift vectors. 

\subsection{European payoff}
Let's consider a square-integrable random variable $Z$ of the form $Z = \pi(X_T)$, where $ \pi: \mathcal{C}^d \into \Real $ is a given Borel function. 
Suppose that we are interested in calculating expected value of $Z$. For this purpose, let us introduce an auxiliary variable 
$ \tilde{Z} \coloneqq \pi(\tilde{X}_T) $ and consider its conditional expectation $ \E_t \tilde{Z} $.
Since the process $\tilde{X}$ is Markovian, it holds $ \E_t \tilde{Z} = \E^\PP \left( \tilde{Z} | X_t \right) $, 
so there exists a function $ \psi: [0,T] \times \mathcal{C} \into \Real $, such that 
\begin{equation}\label{eq:psi}	
\E_t \tilde{Z}  = \psi \left( t, \tilde{X}_t \right)
\end{equation}
Moreover, by the Feynman-Kac theorem $\psi$ is a solution of the PDE
\begin{equation}\label{pde}
\pder{\psi(t,x)}{t} + \delta(t,x)^T \tilde{\mu}(t,x) + \half \tr \left( \tilde{\sigma}(t,x)^T \gamma(t,x) \tilde{\sigma}(t,x) \right) = 0 
\end{equation}
where
\begin{equation}
\delta(t,x) \coloneqq \left[ \pder{\psi(t,x)}{x_k} \right]_{k=1,\ldots,d}, \quad \gamma(t,x) \coloneqq \left[ \ppder{\psi(t,x)}{x_j}{x_k} \right]_{j,k=1,\ldots,d}
\end{equation}

Now, let us define the process 
\begin{equation}
V_t = \psi \left( t, X_t \right)
\end{equation}
Note that by definition, for any value of $\tilde{X}_T$ 
\begin{equation}
\psi \left( T, \tilde{X}_T \right) = \E_T \pi(\tilde{X}_T) = \pi(\tilde{X}_T)
\end{equation}
so by virtue of our assumption $ \psi \left( T, x \right) = \pi(x) $ for any $ x \in  \mathcal{C} $. In particular
\begin{equation}
V_T = \psi \left( T, X_T \right) = \pi(X_T) = Z
\end{equation}
Furthermore, from the Ito formula it follows	
\begin{equation}
dV_t = \pder{\psi}{t}(t,X_t) dt + \delta(t,X_t)^T \left( \mu_t dt + \sigma_t dW \right) + \half \tr \left( \sigma_t^T \gamma(t,X_t) \sigma_t \right) dt
\end{equation}

Setting $x=X_t$ in the equation \ref{pde} and subtracting its left-hand side from the formula above we get
\begin{equation}
dV_t = \xi_t dt + \delta(t,X_t)^T \sigma_t dW_t
\end{equation}
where
\begin{equation}\label{defxi}
	\xi_t = \delta(t,X_t)^T \left(\mu_t - \tilde{\mu}(t,X_t) \right) +
	\half \tr \left( \sigma_t^T \gamma(t,X_t) \sigma_t - \tilde{\sigma}(t,X_t)^T \gamma(t,x) \tilde{\sigma}(t,X_t) \right)
\end{equation}
In an integral form
\begin{equation}
V_T = V_0 + \int_0^T \xi_t dt + \int_0^T \delta(t,X_t)^T \sigma_t dW_t
\end{equation}
Taking expected value we get
\begin{equation}
\E V_T = V_0 + \E \int_0^T \xi_t dt
\end{equation}
or equivalently 
\begin{equation}\label{europ_result}
\E Z = \psi \left( 0, X_0 \right) + \E \int_{0}^{T} \xi_t dt
\end{equation}
which is our main result. \\

\begin{rem} 
	The formula \ref{europ_result} can be viewed as a calculation of the expected value assuming the "wrong", simplified dynamics of the underlying process, plus a correction term for this dynamics change. 
\end{rem}	
\begin{rem} 
	The formula \ref{europ_result} is directly applicable to option pricing under the $T$-forward measure, in which the expected payoff is undiscounted. However, it is not restrictive to this very measure. Indeed, any numéraire with dynamics fitting the general form \ref{dyn} can be included as an additional $d+1$'th component of the processes $X$ and $\tilde{X}$. Then, with a little modification of the function $\pi$, the variable $Z$ can include the numéraire in the denominator, while still being a function of $X_T$.		
\end{rem}		

\subsection{Path-dependent payoffs}
In general, a path-dependent payoff takes the form $ \pi^{*} \left(\left(X_s\right)_{s \in [0,T]} \right) $, where $\pi^*$ is a functional determined for any continuous path of observed values of the process $X_t$. However, given that in practice both observed values and time are discrete, in order to simplify the formalism, we will consider payoffs of the form 
	$Z = \pi \left( X_{T_0}, \ldots, X_{T_n} \right)$, where at discrete moments $0 = T_0 < T_1 < \ldots < T_n \leqslant T$. 
\\ \\
Similar as in the European case, we introduce analogous variable 
	$ \tilde{Z} = \pi \left( \tilde{X}_{T_1}, \ldots, \tilde{X}_{T_n} \right) $ 
and consider its conditional expectation $ \E_t \tilde{Z} $. 
Then, due to the Markovian property of $\tilde{X}$
\begin{equation}
\E_t \tilde{Z} = \E^\PP \left( \tilde{Z}  | \tilde{X}_t, \left\{ \tilde{X}_{T_j}: T_j < t \right\} \right)
				  = \E^\PP \left( \tilde{Z}  | \tilde{X}_t, \left\{ \tilde{X}_{T_j}: T_j \leqslant t \right\} \right)
\end{equation}				  
So, there exist functions 
	$ \psi_1, \ldots, \psi_n, \; \psi_k: [T_{k-1}, T_k] \times \mathcal{C}^d \times (\mathcal{C}^d)^{k-1} \into \Real $, 
such that for $ T_{k-1} \leqslant t \leqslant T_k $ it holds
\begin{equation}
\E_t \tilde{Z} = \psi_k \left( t, \tilde{X}_t; \tilde{\mathbf{Y}}_k \right)
\end{equation}
where $ \tilde{\mathbf{Y}}_k = \left[ \tilde{X}_{T_j} \right]_{1 \leqslant j < k} $ 
is a vector of values of $\tilde{X} $ actually observed prior to the time $t$ and can be viewed as parameter of the function $\psi_k$ (in the degenerate case $k = 0$ the vector $\tilde{\mathbf{Y}}_k$ is not defined and shall be skipped). 
Notice that 
\begin{equation}\label{eq:psi_k}
\psi_k \left( T_k, \tilde{X}_{T_k}; \tilde{\mathbf{Y}}_k \right) = \psi_{k+1} \left( T_k, \tilde{X}_{T_k}; \tilde{\mathbf{Y}}_{k+1} \right)
\end{equation}
\\ 
Next, we define the process $V_t$ so that
\begin{equation}
V_t \coloneqq \psi_{k}\left(t, X_t; \mathbf{Y}_k\right), \quad  t \in [T_{k-1},T_k]
\end{equation}
where $ \mathbf{Y}_k = \left[X_{T_j}\right]_{1 \leqslant j < k}$ is a vector of discrete values of $X_t$, observed until the time $T_k$ 
\\\\
Mimicking the reasoning presented in the previous subsection and defining analogously
\begin{equation} 
\delta_k (t,x;y) \coloneqq \left[ \pder{\psi_k (t,x;y)}{x_j} \right]_{j=1,\ldots,d}, \quad
\gamma_k (t,x;y) \coloneqq \left[ \ppder{\psi_k (t,x;y)}{x_i}{x_j} \right]_{i,j=1,\ldots,d}
\end{equation}
\begin{equation}\label{defxi2}	   
	\xi_{t,k} = \delta_k (t,X_t;\mathbf{Y}_k)^T \left(\mu_t - \tilde{\mu}(t,X_t) \right) +
    \half \tr \left( \sigma_t^T \gamma_k (t,X_t;\mathbf{Y}_k) \sigma_t - \tilde{\sigma}(t,X_t)^T \gamma_k (t,X_t;\mathbf{Y}_k) \tilde{\sigma}(t,X_t) \right) 
\end{equation}    
\vspace{2mm} 
we find that 
\begin{equation}
\E_{T_{k-1}} V_{T_k} = V_{T_{k-1}} + \E_{T_{k-1}} \int_{T_{k-1}}^{T_k} \xi_{t,k} dt
\end{equation}
\vspace{2mm} 
By applying the expected value to both sides and using the tower property, we obtain
\begin{equation}\label{increment_k}
	\E V_{T_k} - \E V_{T_{k-1}} = \E \int_{T_{k-1}}^{T_k} \xi_{t,k} dt
\end{equation}
Finally, we notice that by definition
\begin{equation}
	V_{T_n} = \psi_n\left(T,X_T;\mathbf{Y}_n\right) = \pi_n\left(\mathbf{Y}_n,X_T\right) = \pi\left(X_{T_0}, \ldots, X_{T_n}\right) = Z
\end{equation}
Hence, summing up \eqref{increment_k} over $k$ we get
\begin{equation}\label{path_dep_result}
	\E Z = \psi_1\left(0, X_0\right) + \E \sum_{k=1}^{n} \int_{T_{k-1}}^{T_k} \xi_{t,k} dt = \psi_1 \left( 0, X_0 \right) + \E \int_{0}^{T} \xi_t dt
\end{equation}
where we denoted $\xi_t \coloneqq \xi_{t,k(t)}, \; k(t) \coloneqq min \{k: t \le T_k \}$
\\
\begin{rem}
Although we considered only a discrete observation case, the formula for $\E Z$ does not assume any specific frequency of observations, which can be arbitrarily dense over time. This allows to use our method to approximate the value of the payoff also if observations are done in continuous time. Hence, it would be tempting to extend the formula \ref{path_dep_result} and assume it to be valid also in for such payoffs. This is indeed possible in certain cases, but requires some caution. Namely, both functions $\psi$ and $\xi$ derived from it may converge to the corresponding continuous case with an increasing number of observations. However, there is no guarantee that the expected value of the integral in \ref{path_dep_result} will also converge. A detailed discussion of this problem goes far beyond the scope of this paper, though.
\end{rem}

\subsection{Greeks calculation}
Our approach can be used not only for pricing, but also for the calculation of Greek coefficients, such as delta and gamma. For reasons of brevity, we present the derivation only for the European (path independent) case and $d=1$. Path-dependent and/or multidimensional cases are analogous, however, the formulas become more complicated. 
\\ \\
Namely, let us notice that $\xi_t$ in \ref{defxi} has a form
\begin{equation}
\xi_t = \xi \left( t, X_t, \mu_t, \sigma_t \right)
\end{equation}
Thus, we can calculate option delta by differentiating \ref{europ_result} as
\begin{equation}
\Delta = \pder{}{X_0} \psi \left( 0, X_0 \right)  + \pder{}{X_0} \E \int_{0}^{T} \xi_t dt = 
				\pder{}{X_0} \psi \left( 0, X_0 \right)  + \E \int_{0}^{T} \pder{}{X_0} \xi \left( t, X_t, \mu_t, \sigma_t \right) dt
\end{equation}
The $1^{st}$ term in the equation above is nothing but the option delta in the simplified model, which can be calculated easily an precisely by choosing tractable simplified dynamics. In turn, the integrand in the $2^{nd}$ term can be expressed as
\begin{equation}
\pder{}{X_0} \xi \left( t, X_t, \mu_t, \sigma_t \right) = \pder{X_t}{X_0} \pder{\xi}{X} \left( t, X_t, \mu_t, \sigma_t \right) + 
		\pder{\mu_t}{X_0} \pder{\xi}{\mu} \left( t, X_t, \mu_t, \sigma_t \right) + \pder{\sigma_t}{X_0} \pder{\xi}{\sigma} \left( t, X_t, \mu_t, \sigma_t \right)
\end{equation}		
where the derivatives of processes $X_t, \mu_t, \sigma_t $ in respect to $X_0$ are interpreted as path-wise. 
\\ \\
In general, these derivatives can be calculated numerically, either by using finite difference scheme or using Algorithmic Differentiation technique (see e.g. \cite{BBCD}, \cite{CapAAD}, \cite{Henr}, \cite{Savin}). 
Note that our method is well suited for the Adjoint Differentiation method, as the latter requires work memory  proportional to the number of operation of the feeding algorithm, so the possibility to compute the impact of each path independently is essential.
\\ \\
It is also worth mentioning that in a special, yet important case, no numerical differentiation technique is needed. Namely, if $X$ follows the dynamics with the solution of the general form
\begin{equation}
X_t = X_0 \exp \left( \int_{0}^{t} \sigma_u dW_u + \half \int_{0}^{t} \left(\mu(u) - \sigma^2_u\right) du \right)
\end{equation}
where $\mu(t) = \frac{\tilde{\mu}(t,x)}{x}$ and $\sigma_u$ is not explicitly dependent on $X_t$ , then 
\begin{equation}
\pder{X_t}{X_0} = \frac{X_t}{X_0}
\end{equation}
and finally 
\begin{equation}
\Delta = \pder{}{X_0} \psi \left( 0, X_0 \right) + \E \int_{0}^{T} \frac{X_t}{X_0} \pder{\xi}{X} \left( t, X_t, \mu_t, \sigma_t \right) dt
\end{equation}
Applying the same reasoning we find 
\begin{equation}
\Gamma = \qder{}{X_0} \psi \left( 0, X_0 \right) + \E \int_{0}^{T} \frac{X^2_t}{X^2_0} \qder{\xi}{X} \left( t, X_t, \mu_t, \sigma_t \right) dt
\end{equation}

\subsection{Application to Monte Carlo}
We derived formulas that provide a practical and efficient way of calculating $\E Z$. Indeed, the function $\psi$ can be calculated quite easily and precisely by solving the PDE \ref{pde}, either analytically or numerically. In turn, the expected value of the "correction term" $ \int_{0}^{T} \xi_t dt $ can be computed by means of an MC simulation. Although this is still associated with a sampling error, such calculation has a significant advantage compared to a generic MC simulation of $Z$. This is due to the fact that, in practical cases, the correction term is small, and its volatility is much smaller than the volatility of $V_T$. 
\\ \\
This benefit does not come without a cost, as it requires path-wise evaluation of $\xi_t$. This additional computational cost is not large, though, when we realize that $\xi_t$ does not need to be evaluated at each simulation time step on the path. Indeed, what we need to calculate is only
\begin{equation}
 J \coloneqq \E \int_{0}^{T} \xi_t dt = \int_{0}^{T} \overline{\xi} (t) dt
\end{equation}
where $ \overline{\xi} (t) \coloneqq \E \xi_t $ is a continuous function of time. 
Therefore, usually the integral can be very well approximated with the use of quadratures. Namely 
\begin{equation}
J \approx T \sum_{k=1}^{L} w_k \overline{\xi} (a_k T)
\end{equation}
where $ \left[w_k\right]_{k=1,\ldots,L}, \; \left[a_k\right]_{k=1,\ldots,L} $ 
are weights and abscisas of Gauss-Legendre quadrature, respectively. 
Given that $ \E \xi_t $ can be approximated by an average over simulated paths
\begin{equation}
\E \xi_t \approx \frac{1}{N} \sum_{k=1}^{N} \xi_t (\omega_n)
\end{equation}
where $\omega_n$ identify an individual scenario and $N$ is the number of MC simulations, we finally get
\begin{equation}
J \approx \frac{T}{N} \sum_{n=1}^{N} \sum_{k=1}^{L} w_k \xi_{a_k T}(\omega_n)
\end{equation}
\\ \\
Another optimization is possible for payoffs including multiple underlyings ($d>1$) and assuming a special form of the system of SDEs explaining the asset dynamics in \eqref{dyn_orig}-\eqref{dyn}. Namely, if each asset is governed by its own SDE, with Wiener processes that are correlated for a given constant correlation matrix $\rho=[\rho_{ij}]_{i,j=1 \ldots d}$ with rank R ($1 \leq R \leq d$), so that

\begin{equation}
d \langle W^{i}, W^{j} \rangle_t =  d \langle \tilde{W}^{i}, \tilde{W}^{j} \rangle_t = \rho_{ij} dt \quad 1 \leq i,j \leq R
\end{equation} 

then \eqref{dyn_orig}-\eqref{dyn} can be expressed as

\begin{equation}\label{dyn_orig_simple}
dX_t = \mu_t dt + diag(\sigma_t) C dW_t
\end{equation}
\begin{equation}\label{dyn_simple} 
d \tilde{X}_t = \tilde{\mu} \left(t, \tilde{X}_t \right) dt + diag(\tilde{\sigma} \left(t, \tilde{X}_t \right)) C d\tilde{W}_t
\end{equation}

with:
\begin{itemize}
\item[$\bullet$] $d \times R$ matrix $C=[c_{jk}]_{j=1\ldots d,k=1\ldots R}$, such that $CC^T=\rho$, and
\item[$\bullet$] diagonal matrices $diag(\sigma_t),diag(\tilde{\sigma} \left(t, \tilde{X}_t \right))$ with, respectively, volatilities $\sigma_t,\tilde{\sigma} \left(t, \tilde{X}_t \right)$ on the diagonal.
\end{itemize}

With such simplified dynamics, we can further optimize the calculation of $\xi_t$ from \ref{defxi} in the European payoff case, and also $\xi_{t,k}$ from \ref{defxi2}, in the path-dependent case. At the first glance it may seem that in both cases the associated computational cost is $O(d^2)$, since the formulas involve $d \times d$ Hessian matrices $\gamma,\gamma_k$. If each Hessian element was to be calculated using a finite difference method, this would require $d^2 + d + 1$ calls of the pricing function $\psi$. However, we don't really need to calculate the whole Hessian, but only some related quadratic forms, with their alternative expressions, namely
\begin{equation}\label{eq:first set}
\tr \left( \sigma_t^T \gamma(t,x) \sigma_t \right), \quad \tr \left( \tilde{\sigma}(t,x)^T \gamma(t,x) \tilde{\sigma}(t,x) \right)
\end{equation}
in the European payoff case, and
\begin{equation}\label{eq:second set}
\tr \left( \sigma_t^T \gamma_k (t,x;y) \sigma_t \right), \quad \tr \left( \tilde{\sigma}(t,x)^T \gamma_k (t,x;y) \tilde{\sigma}(t,x) \right)
\end{equation}
in the path-dependent payoff case. As we demonstrate below, this can be done in linear time $O(d)$.\\\\
We start from a simple lemma that explains the benefits of calculating particular Laplacian values, including a change of variables, to obtain the aforementioned quadratic forms

\begin{lem}\label{lemma}
Let $f$ be a real function that is twice differentiable in $x \in \Real^d$, and let $A = diag(\tilde{c}) C$, where $diag(\tilde{c})$ is a diagonal matrix with vector $\tilde{c}=[\tilde{c}_1,\ldots,\tilde{c}_d]$ on the diagonal, with positive entries. We define the linear mapping
\begin{equation}
\tau(u) \colon \Real^R \to \Real^d \quad \tau(u) \coloneqq A u^T
\end{equation}
where $u^T$ is the column vector of the arguments $u \in \Real^d$ of $\tau$. Then for
\begin{equation}
u(x) =\left(u_1(x),\ldots,u_R(x)\right) \colon \tau(u(x)) = x
\end{equation}
holds 
\begin{equation}\label{eq:lemm}
\nabla^{2} (f \circ \tau) (u(x)) = \sum_{i=1}^R \frac{\partial^2 (f \circ \tau)}{\partial u_i^2} (u(x)) = \tr \left( A^T \kappa(x) A \right)  
\end{equation}
where
\begin{equation}
\kappa(x) \coloneqq \left[ \ppder{f(x)}{x_j}{x_k} \right]_{j,k=1,\ldots,d}
\end{equation}
\end{lem}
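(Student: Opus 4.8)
The plan is to reduce the identity to the multivariate chain rule applied twice, followed by the cyclic invariance of the trace. Write $g \coloneqq f \circ \tau$, so that $g(u) = f(Au^{T})$ for $u \in \Real^{R}$, and observe that the $j$-th coordinate of $\tau$ is the linear form $(Au^{T})_{j} = \sum_{i=1}^{R} A_{ji} u_{i}$, whose derivative in $u_{i}$ is the constant $A_{ji}$. Before any computation I would record the one structural fact needed: since $diag(\tilde c)$ is invertible (its diagonal entries are strictly positive) and $C$ has rank $R$, the matrix $A = diag(\tilde c)\,C$ has full column rank $R$; hence $\tau$ is injective, and the equation $\tau(u(x)) = x$ determines $u(x)$ uniquely on the $R$-dimensional range of $\tau$, which is all the statement presupposes.

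Next I would differentiate $g$. By the chain rule,
\begin{equation}
\pder{g}{u_{i}}(u) = \sum_{j=1}^{d} \pder{f}{x_{j}}(Au^{T})\, A_{ji},
\end{equation}
and, since the entries $A_{ji}$ do not depend on $u$, differentiating once more in $u_{i}$ gives
\begin{equation}
\frac{\partial^{2} g}{\partial u_{i}^{2}}(u) = \sum_{j=1}^{d}\sum_{k=1}^{d} \ppder{f}{x_{j}}{x_{k}}(Au^{T})\, A_{ji} A_{ki} = \sum_{j,k=1}^{d} \bigl[\kappa(Au^{T})\bigr]_{jk}\, A_{ji} A_{ki}.
\end{equation}
This is legitimate because $f$ is twice differentiable; note also that no appeal to the symmetry of $\kappa$ is needed, as the coefficient $A_{ji} A_{ki}$ is already symmetric in $(j,k)$.

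Then I would sum over $i = 1, \dots, R$ and interchange the finite sums:
\begin{equation}
\sum_{i=1}^{R} \frac{\partial^{2} g}{\partial u_{i}^{2}}(u) = \sum_{j,k=1}^{d} \bigl[\kappa(Au^{T})\bigr]_{jk} \sum_{i=1}^{R} A_{ji} A_{ki} = \sum_{j,k=1}^{d} \bigl[\kappa(Au^{T})\bigr]_{jk}\, (AA^{T})_{jk} = \tr\!\left(\kappa(Au^{T})\, AA^{T}\right) = \tr\!\left(A^{T}\kappa(Au^{T})\, A\right),
\end{equation}
where the third equality uses $\sum_{i=1}^{R} A_{ji} A_{ki} = (AA^{T})_{jk}$ together with the symmetry of both matrices, and the last uses cyclicity of the trace. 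Since the left-hand side is, by the definition adopted in the statement, $\nabla^{2}(f\circ\tau)(u)$, evaluating at $u = u(x)$ and using $Au(x)^{T} = \tau(u(x)) = x$ replaces $\kappa(Au(x)^{T})$ by $\kappa(x)$ and produces exactly \eqref{eq:lemm}.

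I do not expect a genuine obstacle here: the argument is essentially index bookkeeping. The only two points that merit an explicit line are the well-posedness of $u(x)$ (handled by the rank argument above) and the term-by-term second differentiation of the composition (covered by the twice-differentiability of $f$); everything else is routine.
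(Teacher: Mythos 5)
Your proof is correct and follows essentially the same route as the paper's: a double application of the chain rule to $f\circ\tau$, summation over the $R$ directions, and the trace identity $\sum_{j,k}\kappa_{jk}(AA^T)_{jk}=\tr(A^T\kappa A)$ (the paper merely writes $AA^T=diag(\tilde c)\,\rho\,diag(\tilde c)$ explicitly and compares the two sides, whereas you transform the left side directly into the right). Your added remark on the well-posedness of $u(x)$ via the full column rank of $A$ is a harmless refinement not present in the paper.
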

\begin{proof}
The proof can be found in the Appendix.
\end{proof}

Now, let's notice that for fixed $t,x,y$ ($y$ for the path-dependent payoff case) and $\omega \in \Omega$ we can regard both $diag(\sigma_t) = diag(\sigma_t(\omega))$ and $diag(\tilde{\sigma}(t,x))$ as diagonal matrices with fixed, positive entries on the diagonal. Let's denote $\lambda_i, v_i$ as respectively the $i$-th (positive) eigenvalue and the corresponding eigenvector of $\rho$ ($1 \leq i \leq R$). Then for the matrix $C$ with $R$ column vectors $\sqrt{\lambda_i} v^T_i$ we can apply Lemma \ref{lemma} to cases $\tilde{c} = \sigma_t$ and $\tilde{c} = \tilde{\sigma}(t,x)$ to express the quadratic forms from \eqref{eq:first set} and \eqref{eq:second set} as certain Laplacian values, as in \eqref{eq:lemm}. Consequently, $\xi_t$ and $\xi_{t,k}$ can be calculated with the use of $2R$ directional derivatives. Each of them requires calculating finite differences with 2 additional evaluations of the functions $\psi,\psi_k$, and this can be done with $4R+1 \leqslant 4d+1 $ evaluations of each function. In addition, the matrix $C$ can be precalculated once for the known correlation matrix $\rho$ and this operation has only marginal contribution to the total computational cost. This type of optimization has been implemented for the purpose of numerical testing, described in the following Section \ref{numerical test}.

\section{Numerical tests}\label{numerical test}

\subsection{Test design}
In this section, we examine our approach empirically for various options option types and parameters. For this purpose we compare estimators of option values obtained from Monte Carlo simulations in two ways: i) as a crude payoff average and ii) with the use of our formulas. For our method we used a moderate number of $N = 5,000$ simulations. To calculate the integral term in \ref{europ_result} and \ref{path_dep_result}, we tested 2 formulas: a (left) Riemann sum with $\Delta t = 0.0001$ and a Gauss-Legendre quadrature with $L = 24$ nodes. As a benchmark we used the average of option payoff based on $N_{bmk} = 1,000,000$ paths. In addition, we calculated a standard error of our and crude MC estimators. For the sake of comparability this was calculated for the same number of $N$ simulations for both methods. 
Analogous calculations were performed for option deltas (except of Riemann sum in our method, which would be too demanding numerically). 
HPC computations for benchmarks and Differentiation was done using Automatic Adjoint Differentiation (AAD) technique for which we used a publicly available calculation tool\footnote{AADC Community Edition by MatLogica,  https://matlogica.com/pricing.php } 
\\ \\
We examined the following types of option payoffs:
\begin{table}[H]
\centering
	\begin{tabular}{c|l|l}
		\toprule
		\# & \multicolumn{1}{c}{Description} & \multicolumn{1}{c}{Payoff} \\
		\midrule
	 	 1 & Vanilla: European plain vanilla call           &  $Z = \max(0, X_T-K)$ \\
		 2 & Barrier: Down-and-out barrier call (no rebate) &  $Z = \max(0, X_T-K) \cdot 1_{m_T > H}, \quad m_T = \min_{0 \leqslant t \leqslant T} X_t $ \\
		 3 & Asian: Asian call (with quarterly averaging)   &  $Z = \max(0, X_{avg}-K), \quad \quad X_{avg} = \frac{1}{4T} \sum_{k=1}^{4T} X_{k/4} $   \\    
		 4 & Basket: Call on an basket (equally weighted)   &  $Z = \max(0, X_{avg}-K), \quad \quad X_{avg} = \frac{1}{10} \sum_{k=1}^{10} X_T^{(k)} $ \\
		 5 & Rainbow: Call on a maximum of 3 assets         &  $Z = \max(0, X_{max}-K), \quad \quad X_{max} = \max_{k=1,2,3} X_T^{(k)}$ \\
		\bottomrule
	\end{tabular}
 \end{table}
For each option type we considered 2 maturities: $T=1$ and $T=5$ and 2 strikes: at-the-money-forward (approximately) and out-of-the-money. Strikes of OTM options were set so that they have similar moneyness to allow for comparability. Exact values of ATMF and OTM strikes are provided in the table with calculation results. 
\\ \\
For each options type calculations were performed assuming 2 different underlying dynamics, described by Heston and SABR models and parameterized as follows:
\begin{table}[H]
	\centering
	\begin{tabular}{l|c|c}
		\toprule
		Name & Dynamics & Parameters \\
		\midrule
		Heston & \makecell{$ dX_t = X_t \left( r dt + \sqrt{v_t} dW_t \right) $ \\ $ dv_t = \kappa (\theta - v_t) dt + \gamma \sqrt{v_t} dZ_t $ } 
		       & \makecell{$ X_0 = 100, r = 0.05 $ \\ $ \theta = v_0, \kappa = 5, \gamma = 0.3, \rho = -0.1 $ }  \\
		\midrule
		SABR   & \makecell{ $ dX_t = v_t X_t^\beta  dW_t $ \\ $ dv_t = \alpha v_t dZ $ } 
			   & $\alpha = 0.4, \beta = 0.5, \rho = 0 $  \\
		\bottomrule
	\end{tabular}
\end{table}		
where $ dW_t dZ_t = \rho dt $. In terms of initial values, in all cases we took $X_0 = 100$. In respect of $v_0$ we differentiate between single-asset and multi-asset options. In case of Vanilla, Barrier and Asian options we took $v_0 = 0.01 $ for Heston and $\sigma_0 = 2.5$ for SABR, respectively. In case of basket options, we assumed each asset to have a different $v_0$ taking the values:
\begin{itemize}
	\item for Heston: $ v_0 = 0.0036, 0.0049, 0.0064, 0.0081, 0.01, 0.0121, 0.0144, 0.0169, 0.0196, 0.0225 $
	\item for SABR:   $ v_0 = 1.8, 2.0, 2.2, 2.4, 2.6, 2.8, 3.0, 3.2, 3.4, 3.6 $
\end{itemize}

In case of rainbow options we only tested the SABR model, assuming step size $\Delta t = 0.0002$, maturity $T=1$ and 3 assets having different $v_0$ values $v_0 = 2, 2.5, 3$, with equal correlation across all asset price pairs $\rho_{i,j} \equiv 0.4$ ($i \neq j$) and independent stochastic volatility processes.

As the simplified, tractable dynamics we used either Black-Scholes or Bachelier model, which in our formalism correspond to $\tilde{\sigma_t} = \tilde{\sigma} X_t^B $, where $B=1$ for Black-Scholes and $B=0$ for Bachelier. The values of $\tilde{\sigma}$ were chosen in accordance with the original dynamics, so asset volatilities at the inception coincide $\tilde{\sigma_0} = \sigma_0$, which is summarized in the following table:
\begin{table}[H]
\centering
	\begin{tabular}{c|cc}
		\toprule
		$\tilde{\sigma_0}$ & Black-Scholes & Bachelier \\
		\midrule
		Heston    & $ \sqrt{v_0} $     & $ 100 \cdot \sqrt{v_0} $ \\
		SABR      & $ 0.1 \cdot v_0 $  & $ 10 \cdot v_0 $  \\
		\bottomrule
	\end{tabular}
\end{table}
One or two of these dynamics were used depending on the option type. Namely, Black-Scholes was used for Vanilla and Barrier options, while Bachelier for Vanilla, Asian and Basket options. This choice was made so that analytical valuation formulae existed for a given payoff type. In case of barrier options this created some complication, as - by the nature of numerical simulation - the process $m_t$ used for the indication of the barrier breach is actually observed in a discrete time and the impact of this discretization on the option value is not negligible. To account for this effect, while maintaining computational efficiency we used approximate valuation formula described in \cite{BroGla97} (a formula for options with adjusted barrier level and continuous-time observations)

\subsection{Test results}
Calculation results are presented below in four tables, organized as follows:
\begin{itemize}
	\item Data in the Tables 1 and 2 refer to the estimates of option values. Tables 3 and 4 contain analogous data for option deltas. In case of basket options, the delta corresponding to the underlying with the highest volatility is presented. 
	\item Table 1 and 3 contains a comparison of option values (resp. deltas) estimated using crude Monte Carlo and our method. The last column presents a Z-score, calculated as a difference between both estimators and divided by its standard deviation.
	 Additionally, in Table 1 results of our method are presented in 2 versions, using Riemann and quadrature integration formula. 
	\item Table 2 and 4 compare the standard errors of aforementioned estimates. The last column contains a variance reduction factor, calculated as a square of standard errors ratio (crude MC vs our).
\end{itemize}
Based on these results the following observations can be made:
\begin{itemize}
	\item In the vast majority of cases an absolute values of Z-score does not exceed 1.96. Hence, in all these cases the difference between results calculated with our method and with crude Monte Carlo can be attributed to purely random estimation error at the confidence level of 99\%. This constitutes an empirical evidence of correctness / accuracy of our method
	\item Option values calculated with our method using Riemann and quadrature integration techniques are almost identical (the difference being orders of magnitude smaller than the estimation error).
	\item The variance reduction factor varies significantly, depending on the option type and parameters. However, typically it is as large as several dozens for option values and above 10 for option deltas. In general it seems that the worst results are obtained for long-dated out-of-the-money options with higher volatilities (here in SABR).
\end{itemize}

\begin{rem}
	In the Tables 3 and 4 deltas for barrier options were omitted. This is because its calculation using algorithmic differentiation for discontinuous payoffs is not straightforward and requires additional artificial smoothing of the indicator function. We managed to do it for our method, but encountered more difficulties for the raw MC payoff and decided not to invest much time on it. Instead, we calculated raw MC delta by bumping \& revaluation and found them in line with our method. However, we cannot present the comparison of monte carlo error of both methods, hence we omitted this case. 
\end{rem}

\begin{table}[H]
   \centering
	\caption{ \label{tabPVavg} Estimates of expected payoff for various option types and simplified dynamics. Comparison of Crude MC average over 1,000,000 simulations with our Riemann sum and Gauss-Legendre quadrature methods, using 5,000 simulations. }
	\vspace{3mm}
	\begin{adjustbox}{width=\textwidth}
	\begin{tabular}{lll|cr|rrrr}
		\toprule
		Dynamics & Payoff & Simplified & Maturity & Strike & Crude MC & Riemann & Legendre & Z-score \\
		\midrule
		Heston & Vanilla & Black-Scholes & 1Y & 105 & 4.1317 & 4.1555 & 4.1579 & 1.56 \\
		Heston & Vanilla & Bachelier & 1Y & 105 & 4.1293 & 4.1499 & 4.1522 & 1.36 \\
		Heston & Vanilla & Black-Scholes & 1Y & 112 & 1.6215 & 1.6323 & 1.6356 & 0.97 \\
		Heston & Vanilla & Bachelier & 1Y & 112 & 1.6225 & 1.6336 & 1.6327 & 0.68 \\
		Heston & Vanilla & Black-Scholes & 5Y & 128 & 11.5222 & 11.5450 & 11.5396 & 0.57 \\
		Heston & Vanilla & Bachelier & 5Y & 128 & 11.5223 & 11.5513 & 11.5663 & 1.43 \\
		Heston & Vanilla & Black-Scholes & 5Y & 149 & 4.5438 & 4.5421 & 4.5373 & -0.25 \\
		Heston & Vanilla & Bachelier & 5Y & 149 & 4.5353 & 4.5607 & 4.5635 & 0.78 \\
		\midrule
		Heston & Asian & Bachelier & 1Y & 103 & 2.8355 & 2.8530 & 2.8523 & 1.47 \\
		Heston & Asian & Bachelier & 1Y & 106 & 1.5900 & 1.6049 & 1.6043 & 1.34 \\
		Heston & Asian & Bachelier & 5Y & 114 & 6.3486 & 6.3651 & 6.3683 & 1.06 \\
		Heston & Asian & Bachelier & 5Y & 129 & 1.7202 & 1.7308 & 1.7331 & 0.78 \\
		\midrule
		Heston & Barrier & Black-Scholes & 1Y & 105 & 3.6048 & 3.6213 & 3.6250 & 1.59 \\
		Heston & Barrier & Black-Scholes & 1Y & 112 & 1.4587 & 1.4709 & 1.4747 & 1.24 \\
		Heston & Barrier & Black-Scholes & 5Y & 128 & 8.0206 & 7.9970 & 7.9949 & -1.04 \\
		Heston & Barrier & Black-Scholes & 5Y & 149 & 3.4004 & 3.3735 & 3.3650 & -1.64 \\
		\midrule
		Heston & Basket & Bachelier & 1Y & 105 & 2.7273 & 2.7336 & 2.7336 & 1.15 \\
		Heston & Basket & Bachelier & 1Y & 112 & 0.5639 & 0.5678 & 0.5679 & 1.23 \\
		Heston & Basket & Bachelier & 5Y & 128 & 7.3833 & 7.3836 & 7.3849 & 0.11 \\
		Heston & Basket & Bachelier & 5Y & 149 & 1.4158 & 1.4050 & 1.4057 & -1.03 \\
		\midrule
		SABR & Vanilla & Black-Scholes & 1Y & 100 & 10.0623 & 10.1307 & 10.1308 & 1.60 \\
		SABR & Vanilla & Bachelier & 1Y & 100 & 10.0524 & 10.1398 & 10.1419 & 2.08 \\
		SABR & Vanilla & Black-Scholes & 1Y & 118 & 3.9621 & 3.9872 & 3.9870 & 0.66 \\
		SABR & Vanilla & Bachelier & 1Y & 118 & 3.9673 & 3.9877 & 3.9883 & 0.55 \\
		SABR & Vanilla & Black-Scholes & 5Y & 100 & 22.9696 & 23.1963 & 23.1915 & 0.99 \\
		SABR & Vanilla & Bachelier & 5Y & 100 & 22.9937 & 23.2097 & 23.2169 & 0.99 \\
		SABR & Vanilla & Black-Scholes & 5Y & 146 & 9.9313 & 9.9381 & 9.9620 & 0.14 \\
		SABR & Vanilla & Bachelier & 5Y & 146 & 9.9614 & 9.9435 & 9.9269 & -0.15 \\
		\midrule
		SABR & Asian & Bachelier & 1Y & 100 & 6.8722 & 6.9142 & 6.9141 & 1.88 \\
		SABR & Asian & Bachelier & 1Y & 108 & 3.7524 & 3.7936 & 3.7924 & 1.93 \\
		SABR & Asian & Bachelier & 5Y & 100 & 13.7470 & 13.7576 & 13.7672 & 0.23 \\
		SABR & Asian & Bachelier & 5Y & 135 & 2.8534 & 2.8121 & 2.8073 & -0.56 \\
		\midrule
		SABR & Barrier & Black-Scholes & 1Y & 100 & 7.7704 & 7.8016 & 7.8016 & 1.23 \\
		SABR & Barrier & Black-Scholes & 1Y & 118 & 3.2642 & 3.2857 & 3.2857 & 0.72 \\
		SABR & Barrier & Black-Scholes & 5Y & 100 & 15.8369 & 15.8120 & 15.7985 & -0.44 \\
		SABR & Barrier & Black-Scholes & 5Y & 146 & 6.5874 & 6.5224 & 6.5333 & -0.41 \\
		\midrule
		SABR & Basket & Bachelier & 1Y & 100 & 7.3148 & 7.3398 & 7.3399 & 1.68 \\
		SABR & Basket & Bachelier & 1Y & 118 & 1.8283 & 1.8357 & 1.8358 & 0.71 \\
		SABR & Basket & Bachelier & 5Y & 100 & 16.1575 & 16.1202 & 16.1218 & -0.61 \\
		SABR & Basket & Bachelier & 5Y & 146 & 4.2730 & 4.2318 & 4.2412 & -0.51 \\
		\midrule
		SABR & Rainbow & Black-Scholes & 1Y & 100 & 20.3989 & 20.4451 & 20.4297 & 0.54 \\
		SABR & Rainbow & Black-Scholes & 1Y & 118 & 9.5983 & 9.6501 & 9.6216 & 0.43 \\				
		\bottomrule
	\end{tabular}
	\end{adjustbox}
\end{table}

\newpage	
\begin{table}[H]
	\centering
	\caption{ \label{tabPVstd} Standard error of expected payoff estimates and a variance reduction ratio. Comparison of Crude MC and our Gauss-Legendre quadrature method for 5,000 simulations. }
	\vspace{3mm}
	\begin{adjustbox}{width=\textwidth}
	\begin{tabular}{lll|cr|rrr}	
		\toprule
		Dynamics & Payoff & Simplified & Maturity & Strike & Crude MC & Legendre & Variance Ratio \\
		\midrule
		Heston & Vanilla & Black-Scholes & 1Y & 105 & 0.0914 & 0.0154 & 35.1 \\
		Heston & Vanilla & Bachelier & 1Y & 105 & 0.0915 & 0.0155 & 34.7 \\
		Heston & Vanilla & Black-Scholes & 1Y & 112 & 0.0602 & 0.0139 & 18.6 \\
		Heston & Vanilla & Bachelier & 1Y & 112 & 0.0602 & 0.0143 & 17.7 \\
		Heston & Vanilla & Black-Scholes & 5Y & 128 & 0.2700 & 0.0240 & 126.4 \\
		Heston & Vanilla & Bachelier & 5Y & 128 & 0.2697 & 0.0240 & 126.1 \\
		Heston & Vanilla & Black-Scholes & 5Y & 149 & 0.1776 & 0.0229 & 60.2 \\
		Heston & Vanilla & Bachelier & 5Y & 149 & 0.1772 & 0.0337 & 27.6 \\
		\midrule
		Heston & Asian & Bachelier & 1Y & 103 & 0.0613 & 0.0105 & 33.8 \\
		Heston & Asian & Bachelier & 1Y & 106 & 0.0475 & 0.0101 & 22.0 \\
		Heston & Asian & Bachelier & 5Y & 114 & 0.1430 & 0.0155 & 84.6 \\
		Heston & Asian & Bachelier & 5Y & 129 & 0.0781 & 0.0156 & 25.0 \\
		\midrule
		Heston & Barrier & Black-Scholes & 1Y & 105 & 0.0890 & 0.0110 & 65.8 \\
		Heston & Barrier & Black-Scholes & 1Y & 112 & 0.0578 & 0.0123 & 22.1 \\
		Heston & Barrier & Black-Scholes & 5Y & 128 & 0.2482 & 0.0175 & 200.2 \\
		Heston & Barrier & Black-Scholes & 5Y & 149 & 0.1596 & 0.0184 & 74.9 \\
		\midrule
		Heston & Basket & Bachelier & 1Y & 105 & 0.0573 & 0.0036 & 249.9 \\
		Heston & Basket & Bachelier & 1Y & 112 & 0.0261 & 0.0026 & 97.2 \\
		Heston & Basket & Bachelier & 5Y & 128 & 0.1626 & 0.0068 & 579.7 \\
		Heston & Basket & Bachelier & 5Y & 149 & 0.0723 & 0.0084 & 74.4 \\
		\midrule
		SABR & Vanilla & Black-Scholes & 1Y & 100 & 0.2342 & 0.0395 & 35.2 \\
		SABR & Vanilla & Bachelier & 1Y & 100 & 0.2343 & 0.0398 & 34.6 \\
		SABR & Vanilla & Black-Scholes & 1Y & 118 & 0.1557 & 0.0358 & 18.9 \\
		SABR & Vanilla & Bachelier & 1Y & 118 & 0.1557 & 0.0366 & 18.1 \\
		SABR & Vanilla & Black-Scholes & 5Y & 100 & 0.7679 & 0.2172 & 12.5 \\
		SABR & Vanilla & Bachelier & 5Y & 100 & 0.7712 & 0.2197 & 12.3 \\
		SABR & Vanilla & Black-Scholes & 5Y & 146 & 0.6262 & 0.2129 & 8.7 \\
		SABR & Vanilla & Bachelier & 5Y & 146 & 0.6299 & 0.2316 & 7.4 \\
		\midrule
		SABR & Asian & Bachelier & 1Y & 100 & 0.1536 & 0.0194 & 62.5 \\
		SABR & Asian & Bachelier & 1Y & 108 & 0.1166 & 0.0191 & 37.4 \\
		SABR & Asian & Bachelier & 5Y & 100 & 0.3658 & 0.0832 & 19.3 \\
		SABR & Asian & Bachelier & 5Y & 135 & 0.2072 & 0.0817 & 6.4 \\
		\midrule
		SABR & Barrier & Black-Scholes & 1Y & 100 & 0.2212 & 0.0199 & 123.9 \\
		SABR & Barrier & Black-Scholes & 1Y & 118 & 0.1443 & 0.0282 & 26.2 \\
		SABR & Barrier & Black-Scholes & 5Y & 100 & 0.6022 & 0.0760 & 62.7 \\
		SABR & Barrier & Black-Scholes & 5Y & 146 & 0.4593 & 0.1278 & 12.9 \\
		\midrule
		SABR & Basket & Bachelier & 1Y & 100 & 0.1618 & 0.0096 & 285.1 \\
		SABR & Basket & Bachelier & 1Y & 118 & 0.0822 & 0.0089 & 84.5 \\
		SABR & Basket & Bachelier & 5Y & 100 & 0.4215 & 0.0498 & 71.5 \\
		SABR & Basket & Bachelier & 5Y & 146 & 0.2452 & 0.0602 & 16.6 \\
		\midrule
		SABR & Rainbow & Black-Scholes & 1Y & 100 & 0.3154 & 0.0546 & 33.4 \\
		SABR & Rainbow & Black-Scholes & 1Y & 118 & 0.2442 & 0.0515 & 22.5 \\				
		\bottomrule
	\end{tabular}
	\end{adjustbox}
\end{table}

\newpage	
\begin{table}[H]
	\centering
	\caption{ \label{tabDavg} Estimates of forward delta for various option types and simplified dynamics. Comparison of Crude MC average over 1,000,000 simulations with our Riemann sum and Gauss-Legendre quadrature methods, using 5,000 simulations. }
	\vspace{3mm}
	\begin{adjustbox}{width=\textwidth}
	\begin{tabular}{lll|cr|rrr}	
		\toprule
		Dynamics & Payoff & Simplified & Maturity & Strike & Crude MC & Legendre & Z-score \\
		\midrule
		Heston & Vanilla & Black-Scholes & 1Y & 105 & 0.5604 & 0.5619 & 0.64 \\
		Heston & Vanilla & Bachelier & 1Y & 105 & 0.5599 & 0.5645 & 1.90 \\
		Heston & Vanilla & Black-Scholes & 1Y & 112 & 0.2813 & 0.2840 & 1.28 \\
		Heston & Vanilla & Bachelier & 1Y & 112 & 0.2813 & 0.2864 & 2.06 \\
		Heston & Vanilla & Black-Scholes & 5Y & 128 & 0.7121 & 0.7142 & 0.90 \\		
		Heston & Vanilla & Bachelier & 5Y & 128 & 0.7121 & 0.7082 & -1.15 \\
		Heston & Vanilla & Black-Scholes & 5Y & 149 & 0.3684 & 0.3739 & 2.21 \\
		Heston & Vanilla & Bachelier & 5Y & 149 & 0.3687 & 0.3789 & 2.10 \\
		\midrule
		Heston & Asian & Bachelier & 1Y & 103 & 0.5485 & 0.5510 & 1.16 \\
		Heston & Asian & Bachelier & 1Y & 106 & 0.3639 & 0.3664 & 1.17 \\
		Heston & Asian & Bachelier & 5Y & 114 & 0.6129 & 0.6153 & 0.96 \\
		Heston & Asian & Bachelier & 5Y & 129 & 0.2314 & 0.2313 & -0.03 \\
		\midrule
		Heston & Basket & Bachelier & 1Y & 105 & 0.0556 & 0.0180 & 0.55 \\
		Heston & Basket & Bachelier & 1Y & 112 & 0.0177 & 0.0180 & \textcolor{black}{2.95} \\
		Heston & Basket & Bachelier & 5Y & 128 & 0.0001 & 0.0001 & 0.00 \\
		Heston & Basket & Bachelier & 5Y & 149 & 0.0042 & 0.0001 & -0.18 \\
		\midrule
		SABR & Vanilla & Black-Scholes & 1Y & 100 & 0.5251 & 0.5257 & 0.23 \\
		SABR & Vanilla & Bachelier & 1Y & 100 & 0.5245 & 0.5261 & 0.70 \\
		SABR & Vanilla & Black-Scholes & 1Y & 118 & 0.2588 & 0.2601 & 0.59 \\		
		SABR & Vanilla & Bachelier & 1Y & 118 & 0.2588 & 0.2634 & 1.88 \\
		SABR & Vanilla & Black-Scholes & 5Y & 100 & 0.5595 & 0.5663 & 0.79 \\
		SABR & Vanilla & Bachelier & 5Y & 100 & 0.5593 & 0.5530 & -0.82 \\
		SABR & Vanilla & Black-Scholes & 5Y & 146 & 0.2489 & 0.2575 & 1.68 \\
		SABR & Vanilla & Bachelier & 5Y & 146 & 0.2490 & 0.2450 & -0.36 \\
		\midrule
		SABR & Asian & Bachelier & 1Y & 100 & 0.5152 & 0.5151 & -0.10 \\
		SABR & Asian & Bachelier & 1Y & 108 & 0.3349 & 0.3355 & 0.28 \\
		SABR & Asian & Bachelier & 5Y & 100 & 0.5271 & 0.5238 & -0.72 \\
		SABR & Asian & Bachelier & 5Y & 135 & 0.1246 & 0.1201 & -0.60 \\
		\midrule
		SABR & Basket & Bachelier & 1Y & 100 & 0.0530 & 0.0531 & 1.37 \\
		SABR & Basket & Bachelier & 1Y & 118 & 0.0193 & 0.0195 & 1.88 \\
		SABR & Basket & Bachelier & 5Y & 100 & 0.0558 & 0.0559 & 0.47 \\
		SABR & Basket & Bachelier & 5Y & 146 & 0.0197 & 0.0196 & -0.08 \\
		\bottomrule
	\end{tabular}
	\end{adjustbox}
\end{table}

\newpage	
\begin{table}[H]
	\centering
	\caption{ \label{tabPVavg} Standard error of delta estimates and a variance reduction ratio. Comparison of the Crude MC and our Gauss-Legendre quadrature method for 5,000 simulations. }
	\vspace{3mm}
	\begin{adjustbox}{width=\textwidth}
	\begin{tabular}{lll|cr|rrr}	
		\toprule
		Dynamics & Payoff & Simplified & Maturity & Strike & Crude MC & Legendre & Variance Ratio \\
		\midrule
		Heston & Vanilla & Black-Scholes & 1Y & 105 & 0.0080 & 0.0022 & 13.0 \\
		Heston & Vanilla & Bachelier & 1Y & 105 & 0.0080 & 0.0024 & 11.6 \\
		Heston & Vanilla & Black-Scholes & 1Y & 112 & 0.0072 & 0.0021 & 11.8 \\
		Heston & Vanilla & Bachelier & 1Y & 112 & 0.0072 & 0.0025 & 8.5 \\
		Heston & Vanilla & Black-Scholes & 5Y & 128 & 0.0110 & 0.0022 & 25.3 \\
		Heston & Vanilla & Bachelier & 5Y & 128 & 0.0110 & 0.0033 & 10.9 \\
		Heston & Vanilla & Black-Scholes & 5Y & 149 & 0.0100 & 0.0024 & 17.2 \\
		Heston & Vanilla & Bachelier & 5Y & 149 & 0.0100 & 0.0048 & 4.4 \\
		\midrule
		Heston & Asian & Bachelier & 1Y & 103 & 0.0077 & 0.0021 & 13.9 \\
		Heston & Asian & Bachelier & 1Y & 106 & 0.0074 & 0.0021 & 12.0 \\
		Heston & Asian & Bachelier & 5Y & 114 & 0.0091 & 0.0025 & 13.0 \\
		Heston & Asian & Bachelier & 5Y & 129 & 0.0074 & 0.0032 & 5.4 \\
		\midrule
		Heston & Basket & Bachelier & 1Y & 105 & 0.0008 & 0.0001 & 66.1 \\
		Heston & Basket & Bachelier & 1Y & 112 & 0.0006 & 0.0001 & 61.5 \\
		Heston & Basket & Bachelier & 5Y & 128 & 0.0594 & 0.0654 & \textcolor{black}{0.8} \\
		Heston & Basket & Bachelier & 5Y & 149 & 0.2348 & 0.0166 & 200.0 \\
		\midrule
		SABR & Vanilla & Black-Scholes & 1Y & 100 & 0.0079 & 0.0024 & 11.0 \\
		SABR & Vanilla & Bachelier & 1Y & 100 & 0.0079 & 0.0022 & 12.8 \\
		SABR & Vanilla & Black-Scholes & 1Y & 118 & 0.0069 & 0.0022 & 10.3 \\
		SABR & Vanilla & Bachelier & 1Y & 118 & 0.0069 & 0.0024 & 8.2 \\
		SABR & Vanilla & Black-Scholes & 5Y & 100 & 0.0096 & 0.0086 & \textcolor{black}{1.2} \\
		SABR & Vanilla & Bachelier & 5Y & 100 & 0.0098 & 0.0077 & \textcolor{black}{1.6} \\
		SABR & Vanilla & Black-Scholes & 5Y & 146 & 0.0084 & 0.0051 & \textcolor{black}{2.7} \\
		SABR & Vanilla & Bachelier & 5Y & 146 & 0.0087 & 0.0111 & \textcolor{black}{0.6} \\
		\midrule
		SABR & Asian & Bachelier & 1Y & 100 & 0.0076 & 0.0019 & 16.7 \\
		SABR & Asian & Bachelier & 1Y & 108 & 0.0072 & 0.0019 & 13.6 \\
		SABR & Asian & Bachelier & 5Y & 100 & 0.0082 & 0.0046 & \textcolor{black}{3.2} \\
		SABR & Asian & Bachelier & 5Y & 135 & 0.0056 & 0.0075 & \textcolor{black}{0.6} \\
		\midrule
		SABR & Basket & Bachelier & 1Y & 100 & 0.0008 & 0.0001 & 95.7 \\
		SABR & Basket & Bachelier & 1Y & 118 & 0.0006 & 0.0001 & 44.5 \\
		SABR & Basket & Bachelier & 5Y & 100 & 0.0013 & 0.0003 & 23.0 \\
		SABR & Basket & Bachelier & 5Y & 146 & 0.0010 & 0.0005 & \textcolor{black}{3.8} \\
		\bottomrule
	\end{tabular}
	\end{adjustbox}
\end{table}

\newpage	
\section{Acknowledgement}
The views expressed in this article are those of the authors alone and do not necessarily represent those of the author's employers.
The analyses and conclusions set forth are those of the authors and do not reflect in any way the opinions of any institutions in which they currently or have previously held a position. Any such institutions are not responsible for any statement, conclusions not the positions and theories presented herein. 

\newpage

\printbibliography

\newpage	
\section*{Appendix}
\textbf{Proof of Lemma \ref{lemma}:}
\begin{proof}
The proof can be performed by simply comparing the left and right-hand sides of equation \eqref{eq:lemm}, after some rearrangements. By using the chain rule, the left-hand side can be expressed as
\begin{equation}
\begin{split}
\nabla^{2} (f \circ \tau) (u(x)) &= \sum_{i=1}^R \frac{\partial^2 (f \circ \tau)}{\partial u_i^2} (u(x)) =
\sum_{i=1}^R \frac{\partial}{\partial u_i}\left(\sum_{j=1}^d \frac{\partial f(x)}{\partial x_j} \tilde{c}_j  c_{ji}\right) = \sum_{i=1}^R \sum_{j=1}^d \tilde{c}_j c_{ji} \sum_{k=1}^d \ppder{f(x)}{x_j}{x_k} \tilde{c}_k  c_{ki}\\&=
\sum_{j=1}^d \sum_{k=1}^d \ppder{f(x)}{x_j}{x_k} \tilde{c}_j \tilde{c}_k  \left( \sum_{i=1}^R  c_{ji} c_{ki} \right) =
\sum_{j=1}^d \sum_{k=1}^d \ppder{f(x)}{x_j}{x_k} \tilde{c}_j \tilde{c}_k \rho_{jk}
\end{split}
\end{equation}
From the elementary properties of the trace operator, the right-hand side can be expressed as
\begin{equation}
\begin{split}
\tr \left( A^T \kappa(x) A \right) &= \tr \left( \kappa(x) AA^T \right) = \tr \left(\kappa(x) diag(\tilde{c}) \rho diag(\tilde{c})\right) \\&= 
\sum_{j=1}^d \sum_{k=1}^d \ppder{f(x)}{x_j}{x_k} \tilde{c}_j \tilde{c}_k \rho_{kj} = \sum_{j=1}^d \sum_{k=1}^d \ppder{f(x)}{x_j}{x_k} \tilde{c}_j \tilde{c}_k \rho_{jk}
\end{split}
\end{equation}
Consequently, the left and right-hand side have identical expressions, which proves the assertion.
\end{proof}	

\end{document}